\documentclass[11pt]{article}
\usepackage[margin=1.2in]{geometry}
\usepackage{amsmath,amssymb,amsthm}
\usepackage{mathtools}
\usepackage{microtype}
\usepackage{parskip}
\usepackage{booktabs}
\usepackage{graphicx}
\graphicspath{{figures/}{figures/nb_figures/}}
\usepackage{hyperref}
\usepackage{natbib}
\hypersetup{colorlinks=true,linkcolor=blue,citecolor=blue,urlcolor=blue}

\theoremstyle{plain}
\newtheorem{theorem}{Theorem}[section]

\newtheorem{corollary}[theorem]{Corollary}
\newtheorem{proposition}[theorem]{Proposition}

\theoremstyle{definition}

\newtheorem{assumption}[theorem]{Assumption}

\newcommand{\norm}[1]{\left\lVert#1\right\rVert}
\newcommand{\xbar}{\bar{x}}

\newcommand{\conv}{\operatorname{conv}}

\title{\textbf{Geometric Comparisons of Electoral Rules Under Feedback}}

\author{Sumit Mukherjee\\[0.3em]
\small Oracle Health}

\date{April 2026}

\begin{document}
\maketitle

\begin{abstract}
We study how electoral rules shape polarization dynamics when voters and candidates both adapt to repeated election outcomes. We introduce two geometric primitives for comparing rules under this feedback: the \emph{winner radius} $R_t = \max_i \|x_i - w^{(t)}\|$, the distance from the winner to the farthest voter, and the \emph{supporter centroid radius} $S_t = \max_j \|c_j - s_j^{(t)}\|$, the largest gap between any candidate and their support base. We show that $R_t$ controls a one-step contraction bound on voter disagreement and $S_t$ plays the analogous role for candidate dispersion, and that these two objectives are in tension. Rules that reduce 
$R_t$ tend to increase $S_t$, and vice versa. A winner close to the voter median does not resolve the tension, since proximity to the median and proximity to the Chebyshev center are different objectives. We use this framing to organize a simulation study across seven standard electoral rules and one convex-combination benchmark, comprising 
1000+ runs across diverse electorate profiles, voter mechanisms, and camp-balance settings. The empirical results confirm the theoretical tradeoff: winner-take-all rules achieve small $S_t$ at the cost of large $R_t$ and weaker voter depolarization, while convex-combination rules reverse this. An oracle comparison further shows that minimizing $R_t$ per step and minimizing voter disagreement per step are distinct objectives with different long-run consequences for both voter and candidate dynamics.
\end{abstract}

\section{Introduction}
\label{sec:intro}

Electoral rules are usually evaluated in one of two styles. The static style asks which winner a rule selects for a fixed preference profile, or what equilibrium incentives it creates\citep{Downs1957,EnelowHinich1984,Cox1990}. The dynamic style asks how preferences, platforms, or polarization evolve under repeated interaction \citep{CallanderCarbajal2022,DeGroot1974,HegselmannKrause2002,FlacheEtAl2017,LevinEtAl2021,DiermeierLi2023}. If elections are repeated and voters and candidates both respond to realized winners, a rule should be judged not only by the winner it picks today but by the trajectory it induces.

There are two relevant families of prior works on this. The \emph{repeated-election} literature in political economy studies adaptive voters and strategic candidates in closed form \citep{CallanderCarbajal2022,BanksDuggan2002,DugganFey2006,DiermeierLi2023}, and gives equilibrium characterizations for specific models. The \emph{metric distortion} literature in computational social choice \citep{ABP2015,GKM2017,GHS2020,CharikarRamakrishnan2021} compares rules by geometric primitives (typically worst-case social cost) without solving for equilibria. What is currently missing from both, and what we try to develop in this paper, is a comparison object for \emph{dynamics}: a geometric primitive that both controls how quickly voter disagreement contracts in response to a winner, and admits a natural candidate-side counterpart.

We make two contributions. First, we introduce two geometric primitives for comparing electoral rules under repeated feedback: i) the \emph{winner radius} $R_t = \max_i \|x_i - w^{(t)}\|$, which controls how fast voter disagreement contracts after one election; ii) the \emph{supporter centroid radius} $S_t = \max_j \|c_j - s_j^{(t)}\|$, which plays the analogous role on the candidate side. We show that these two objectives are generally in tension: a rule that reduces $R_t$ (benefiting voter depolarization) tends to increase $S_t$ (pulling candidates away from their support bases), and vice versa. A rule that minimizes a standard one-shot centrality metric need not minimize either. Second, we use this framing to organize a simulation study across seven standard electoral rules and one hypothetical benchmark~\citep{mukherjee2026electoral}. Section~\ref{sec:theory} establishes the one-step contraction bounds that formalize the $(R_t, S_t)$ tradeoff. Section~\ref{sec:simulations} presents the simulation evidence.

\section{Model}
\label{sec:model}

Let $\Omega \subset \mathbb{R}^d$ be a nonempty compact convex set representing the 
policy space. At each round $t \in \{0,1,2,\dots\}$, voter $i$ holds position 
$x_i^{(t)} \in \Omega$ and candidate $j$ holds position $c_j^{(t)} \in \Omega$. An 
electoral rule $s$ maps the current voter positions and candidate slate to a winner 
$w^{(t)} = W_s(x^{(t)}, C^{(t)}) \in \Omega$.

\paragraph{Voter update.} After each election, voters shift toward the winner:
\begin{equation}
  x_i^{(t+1)}
  = \underbrace{\bigl(1-\eta_i^{(t)}\bigr)x_i^{(t)}}_{\text{inertia}}
  + \underbrace{\eta_i^{(t)} w^{(t)}}_{\text{winner attraction}}
  + \underbrace{\varepsilon_i^{(t)}}_{\text{noise}}.
  \label{eq:voter_update}
\end{equation}
The coefficient $\eta_i^{(t)} \in [0,1)$ controls how strongly voter $i$ is drawn 
toward the winner; voters closer to the winner attract more weakly (Assumption~\ref{ass:voter_attraction}). This is a heterogeneous-weight DeGroot 
process~\citep{DeGroot1974} in which the winner acts as the sole influence node and 
each voter's susceptibility depends on their distance to it.

\paragraph{Candidate update.} Candidates respond to three forces simultaneously:
\begin{equation}
  c_j^{(t+1)}
  = c_j^{(t)}
  + \underbrace{\lambda_j^{(t)}\bigl(s_j^{(t)} - c_j^{(t)}\bigr)}_{\text{supporter chase}}
  + \underbrace{\mu\bigl(\bar{x}^{(t)} - c_j^{(t)}\bigr)}_{\text{centroid pull}}
  + \underbrace{\nu r_j^{(t)}}_{\text{rival repulsion}}
  + \underbrace{\delta_j^{(t)}}_{\text{noise}}.
  \label{eq:candidate_update}
\end{equation}
The \emph{supporter chase} term moves candidate $j$ toward $s_j^{(t)}$, the centroid 
of their current supporters, at rate $\lambda_j^{(t)} \in (0,1)$; when $\lambda_j = 1$ 
and the remaining terms are zero this reduces to the aggregator rule of~\citet{laver2005}, 
in which a candidate's next position is exactly their supporter centroid. The 
\emph{centroid pull} term with coefficient $\mu \ge 0$ applies a global force toward 
the electorate mean $\bar{x}^{(t)}$, capturing institutional or reputational pressure 
toward the center. The \emph{rival repulsion} term $r_j^{(t)}$ pushes candidate $j$ 
away from nearby competitors, with strength $\nu \ge 0$.

The supporter centroid is
\[
  s_j^{(t)} \;=\; \frac{\sum_i \alpha_{ij}^{(t)}\, x_i^{(t)}}{\sum_i \alpha_{ij}^{(t)}},
\]
where $\alpha_{ij}^{(t)} \ge 0$ are assignment weights satisfying $\sum_j \alpha_{ij}^{(t)} = 1$ for all $i$. Hard assignment ($\alpha_{ij} \in \{0,1\}$) gives Voronoi-partitioned support regions as in plurality voting; soft assignment ($\alpha_{ij} \in [0,1]$) gives graded support as in score or fractional rules. This distinction will matter for the candidate-side bound in Section~\ref{sec:theory}.

Voter disagreement and candidate dispersion are measured by empirical variance:
\begin{equation}
  D^{(t)} := \frac{1}{n}\sum_{i=1}^n \norm{x_i^{(t)} - \xbar^{(t)}}^2,
  \qquad
  P^{(t)} := \frac{1}{K}\sum_{j=1}^K \norm{c_j^{(t)} - \bar{c}^{(t)}}^2.
  \label{eq:DP}
\end{equation}

The simulations in Section~\ref{sec:simulations} use three voter mechanisms (varying center-seeking, winner attraction, and distance-dependent backlash) and three candidate mechanisms (varying $\lambda$, $\mu$, and $\nu$). The electorate profiles are each studied under different camp-balance ratios. A normalized camp-displacement asymmetry statistic
\[
  A^t = \frac{\bigl|\,\norm{\mu_{\mathrm{min}}^t - \mu_{\mathrm{min}}^0} - \norm{\mu_{\mathrm{maj}}^t - \mu_{\mathrm{maj}}^0}\,\bigr|}
             {\norm{\mu_{\mathrm{min}}^t - \mu_{\mathrm{min}}^0} + \norm{\mu_{\mathrm{maj}}^t - \mu_{\mathrm{maj}}^0} + \varepsilon}
\]
tracks whether the minority camp does more of the moving; values near zero indicate symmetric convergence. 

Let $\mathcal{F}_t$ and $\mathcal{G}_t$ denote the natural filtrations of the voter-side and candidate-side dynamics, respectively, up to round $t$; in particular, they encode the state of the system and all past randomness known at time $t$, before the new noise terms at round $t$ are realized.

\section{A Toolkit for Comparing Electoral Systems}
\label{sec:theory}

\subsection{Assumptions}

We make four assumptions. The first is standard; the remaining three are the minimal conditions needed for the electoral rule to matter dynamically.

\begin{assumption}[Bounded opinion space]
\label{ass:bounded}
$\Omega \subset \mathbb{R}^d$ is compact and convex.
\end{assumption}

\begin{assumption}[Winner-dependent voter attraction]
\label{ass:voter_attraction}
$\eta_i^{(t)} = g(\|x_i^{(t)} - w^{(t)}\|)$ for a Lipschitz, monotone increasing function
$g:[0,\mathrm{diam}(\Omega)]\to[\eta_{\min},\eta_{\max}]$ with $0 < \eta_{\min} \le 
\eta_{\max} < 1$ and Lipschitz constant $L_\eta$. Monotonicity encodes the modeling
assumption that voters closer to the winner are less susceptible to further attraction;
the contraction bound below uses only Lipschitzness of $g$, not its sign.
\end{assumption}

\begin{assumption}[Supporter-dependent candidate attraction]
\label{ass:candidate_attraction}
$\lambda_j^{(t)} = h(\|c_j^{(t)} - s_j^{(t)}\|)$ for a Lipschitz function 
$h:[0,\mathrm{diam}(\Omega)]\to[\lambda_{\min},\lambda_{\max}]$ with $0 < \lambda_{\min} 
\le \lambda_{\max} < 1$ and Lipschitz constant $L_h$.
\end{assumption}

\begin{assumption}[Noise and repulsion]
\label{ass:noise}
The noise terms satisfy $\mathbb{E}[\varepsilon_i^{(t)}\mid\mathcal{F}_t]=0$ and 
$\mathbb{E}[\|\varepsilon_i^{(t)}\|^2\mid\mathcal{F}_t]\le\sigma_\varepsilon^2$, 
conditionally uncorrelated across $i$; analogously for $\delta_j^{(t)}$ with bound 
$\sigma_\delta^2$. The repulsion term satisfies $\|r_j^{(t)}\| \le \rho$ almost surely.
\end{assumption}

Assumptions~\ref{ass:voter_attraction}--\ref{ass:candidate_attraction} require that 
attraction coefficients vary with distance to the winner (or supporter centroid) rather 
than being fixed. This heterogeneity is what makes the electoral rule matter: when 
attraction is uniform the rule has no effect on the contraction rate, as the following 
proposition shows.

\begin{proposition}[Uniform-attraction baseline]
\label{prop:homogeneous}
If $\eta_i^{(t)}\equiv\eta$ and $\varepsilon_i^{(t)}\equiv 0$, then $D^{(t+1)} = 
(1-\eta)^2 D^{(t)}$ for every rule. If $\lambda_j^{(t)}\equiv\lambda$, $\mu=\nu=0$, 
and $\delta_j^{(t)}\equiv 0$, then $P^{(t+1)} = (1-\lambda)^2 P^{(t)}$ for every rule.
\end{proposition}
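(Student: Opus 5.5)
The plan is to exploit the fact that, once the attraction coefficient is the same for every agent, each update is affine with a common weight. An affine map with common weight commutes with taking the empirical mean, so the deviation from the mean is simply rescaled. I will treat the voter and candidate halves separately. In each half I will compute the new mean, subtract it from the new individual position, and square and average. The electoral rule enters only through $w^{(t)}$ (respectively the $s_j^{(t)}$), so the argument has to show that these terms cancel from the deviations.

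\emph{Voter side.} With $\eta_i^{(t)}\equiv\eta$ and $\varepsilon_i^{(t)}\equiv 0$, \eqref{eq:voter_update} gives $x_i^{(t+1)}=(1-\eta)x_i^{(t)}+\eta w^{(t)}$. Averaging over $i$ yields $\xbar^{(t+1)}=(1-\eta)\xbar^{(t)}+\eta w^{(t)}$, because $w^{(t)}$ is common to all voters. Subtracting, the winner drops out:
\[
x_i^{(t+1)}-\xbar^{(t+1)}=(1-\eta)\bigl(x_i^{(t)}-\xbar^{(t)}\bigr).
\]
Taking squared norms and averaging as in \eqref{eq:DP} gives $D^{(t+1)}=(1-\eta)^2D^{(t)}$. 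Since $w^{(t)}=W_s(x^{(t)},C^{(t)})$ never appears, this holds for every rule $s$.

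\emph{Candidate side.} With $\lambda_j^{(t)}\equiv\lambda$, $\mu=\nu=0$ and $\delta_j^{(t)}\equiv0$, \eqref{eq:candidate_update} becomes $c_j^{(t+1)}=(1-\lambda)c_j^{(t)}+\lambda s_j^{(t)}$. Averaging gives $\bar c^{(t+1)}=(1-\lambda)\bar c^{(t)}+\lambda\bar s^{(t)}$, where $\bar s^{(t)}=\frac1K\sum_j s_j^{(t)}$. Hence
\[
c_j^{(t+1)}-\bar c^{(t+1)}=(1-\lambda)\bigl(c_j^{(t)}-\bar c^{(t)}\bigr)+\lambda\bigl(s_j^{(t)}-\bar s^{(t)}\bigr).
\]

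This step is the main obstacle. Unlike the winner on the voter side, the supporter centroids are candidate-specific, so the second term does not cancel automatically. Expanding the square gives
\[
P^{(t+1)}=(1-\lambda)^2P^{(t)}+\frac{2\lambda(1-\lambda)}{K}\sum_j\bigl\langle c_j^{(t)}-\bar c^{(t)},\,s_j^{(t)}-\bar s^{(t)}\bigr\rangle+\frac{\lambda^2}{K}\sum_j\norm{s_j^{(t)}-\bar s^{(t)}}^2 .
\]
The exact identity $P^{(t+1)}=(1-\lambda)^2P^{(t)}$ therefore requires the two extra terms to vanish, which happens when all supporter centroids coincide, $s_j^{(t)}\equiv\bar s^{(t)}$. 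In that case the candidate argument is verbatim the voter argument, with the common centroid playing the role of $w^{(t)}$.

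For the candidate claim I would first try to make precise the reading under which the proof goes through. One reading restricts to the case where candidates chase a common target. The other reads the claim as saying only that the homogeneous contraction factor $(1-\lambda)^2$ applied to the candidate-dispersion term is rule-independent, with the rule affecting $P^{(t+1)}$ only through the additive supporter-centroid terms displayed above. Under the first reading the identity is exact; under the second, the display above is the precise statement.

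I would state explicitly which reading is used. For general hard or soft assignments with distinct $s_j^{(t)}$, the identity as literally written does not follow, and I would flag this rather than force it.
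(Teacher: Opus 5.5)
Your voter-side argument is exactly the paper's proof: a common weight and a common attractor $w^{(t)}$ make the deviation from the mean scale by $1-\eta$, and the rule drops out. For the candidate half, the paper says only that it is ``identical with $\eta$ replaced by $\lambda$.'' That sentence overlooks the point you isolated: the attractor $s_j^{(t)}$ depends on $j$, so $\lambda\bigl(s_j^{(t)}-\bar s^{(t)}\bigr)$ survives in the deviation.

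You were right not to force the identity, and you can upgrade ``does not follow'' to ``is false.'' Take $d=1$, $\Omega=[0,1]$, $K=2$, $c_1^{(t)}=0$, $c_2^{(t)}=1$, and two voters at $0$ and $1$, so $P^{(t)}=1/4$.
\begin{itemize}
\item Under Plurality each candidate sits at its own supporter centroid, $s_j^{(t)}=c_j^{(t)}$. Nobody moves, so $P^{(t+1)}=1/4\neq(1-\lambda)^2/4$ for every $\lambda\in(0,1)$. More generally, any state with $s_j^{(t)}=c_j^{(t)}$ for all $j$ and $P^{(t)}>0$ is a counterexample.
\item Smooth assignment does not rescue the claim. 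With the Fractional weights, each voter puts weight $a=1/(1+e^{-1/\sigma^2})>1/2$ on the nearer candidate. Then $s_1^{(t)}=1-a$ and $s_2^{(t)}=a$, so $c_2^{(t+1)}-c_1^{(t+1)}=(1-\lambda)+\lambda(2a-1)$ and $P^{(t+1)}>(1-\lambda)^2P^{(t)}$.
\end{itemize}
The two rules also give different values of $P^{(t+1)}$ from the same state, so the ``for every rule'' conclusion fails on the candidate side as well.

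Your expansion also gives a sharp version of your flag. Write $X=\frac{1}{K}\sum_j\langle c_j^{(t)}-\bar c^{(t)},\,s_j^{(t)}-\bar s^{(t)}\rangle$ and $V=\frac{1}{K}\sum_j\norm{s_j^{(t)}-\bar s^{(t)}}^2$. Then $P^{(t+1)}-(1-\lambda)^2P^{(t)}=2\lambda X+\lambda^2(V-2X)$. Hence the identity holds for all $\lambda\in(0,1)$ if and only if $V=0$, i.e.\ all supporter centroids coincide.

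One small correction: for a single fixed $\lambda$ the two extra terms can cancel without each vanishing. So your phrase ``requires the two extra terms to vanish'' should refer to their sum.

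This is consistent with the paper's own Proposition~\ref{prop:candidate_main}. With $h$ constant, its factor is $1-\lambda+\lambda L_s$, which equals $1-\lambda$ only when $L_s=0$, i.e.\ exactly your common-target case.

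So adopt your first reading as the hypothesis. The candidate identity holds when all candidates chase a common target, for instance when $\alpha_{ij}$ is independent of $i$, which gives $s_j^{(t)}=\xbar^{(t)}$. That is a degenerate assignment, not ``every rule.'' Your second reading is a correct identity, but it concedes that the rule does affect $P^{(t+1)}$, which is the opposite of what the proposition is meant to show.
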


\begin{proof}
Under uniform attraction, $x_i^{(t+1)} - \bar{x}^{(t+1)} = (1-\eta)(x_i^{(t)} - 
\bar{x}^{(t)})$ for all $i$. Squaring and averaging over $i$ gives $D^{(t+1)} = 
(1-\eta)^2 D^{(t)}$. The candidate side is identical with $\eta$ replaced by $\lambda$.
\end{proof}

\subsection{Voter Disagreement and the Winner Radius}

The key geometric quantity on the voter side is the \emph{winner radius}:
\[
R_t := \max_{1 \le i \le n} \|x_i^{(t)} - w^{(t)}\|,
\]
the distance from the winner to the farthest voter. Intuitively, $R_t$ measures how 
extreme the winner is relative to the full electorate. Our main result is that $R_t$ 
directly controls how fast voter disagreement contracts after one election.

To state this precisely, we use the pairwise representation of voter disagreement:
\begin{equation}
  D^{(t)} = \frac{1}{2n^2}\sum_{i,j=1}^n \|x_i^{(t)} - x_j^{(t)}\|^2.
  \label{eq:pairwise_var}
\end{equation}
This is equivalent to the empirical variance $\frac{1}{n}\sum_i \|x_i^{(t)} - 
\bar{x}^{(t)}\|^2$ but is easier to work with because pairwise distances do not 
require tracking how the mean shifts after each update.

\begin{theorem}[Voter contraction bound]
\label{thm:main}
Under Assumptions~\ref{ass:bounded}--\ref{ass:noise},
\[
  \mathbb{E}\!\left[D^{(t+1)} \mid \mathcal{F}_t\right]
  \le q_t^2\, D^{(t)} + \sigma_\varepsilon^2,
  \qquad
  q_t := 1 - \eta_{\min} + L_\eta R_t.
\]
\end{theorem}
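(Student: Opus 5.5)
We work with the pairwise representation \eqref{eq:pairwise_var} and first handle the noiseless part of the update. Write $y_i := (1-\eta_i)x_i + \eta_i w$ for the deterministic part, so that $x_i^{(t+1)} = y_i + \varepsilon_i$, where we drop the superscript $(t)$. Then
\[
y_i - y_j = (1-\eta_i)(x_i - x_j) + (\eta_i - \eta_j)(w - x_j).
\]
This splits the change in pairwise distance into a common-rate contraction plus a heterogeneity error.

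Taking norms and using $1-\eta_i \le 1-\eta_{\min}$ bounds the first term by $(1-\eta_{\min})\|x_i-x_j\|$. For the second term, Assumption~\ref{ass:voter_attraction} gives $|\eta_i-\eta_j| = |g(\|x_i-w\|)-g(\|x_j-w\|)| \le L_\eta\,\bigl|\|x_i-w\|-\|x_j-w\|\bigr| \le L_\eta\|x_i-x_j\|$ by the reverse triangle inequality. Combined with $\|w-x_j\|\le R_t$, this bounds the second term by $L_\eta R_t\|x_i-x_j\|$. Hence $\|y_i-y_j\|\le q_t\|x_i-x_j\|$ for every pair. Squaring, summing over $i,j$ and dividing by $2n^2$ gives $D_y \le q_t^2 D^{(t)}$, where $D_y$ is the disagreement of the $y_i$. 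Only Lipschitzness of $g$ is used here, not its sign, which matches the remark in the assumption. Note that $y_i$ is $\mathcal{F}_t$-measurable, since $w^{(t)}$ and $\eta_i^{(t)}$ are determined by the state at time $t$.

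Next we add the noise. Since $D$ is the empirical variance, $D^{(t+1)} = \frac1n\sum_i\|(y_i-\bar y)+(\varepsilon_i-\bar\varepsilon)\|^2$. Conditioning on $\mathcal{F}_t$, the cross terms vanish because the $y_i$ are $\mathcal{F}_t$-measurable and $\mathbb{E}[\varepsilon_i\mid\mathcal{F}_t]=0$. The noise contribution is $\frac1n\sum_i\mathbb{E}\|\varepsilon_i-\bar\varepsilon\|^2 = \frac1n\sum_i\mathbb{E}\|\varepsilon_i\|^2 - \mathbb{E}\|\bar\varepsilon\|^2 \le \sigma_\varepsilon^2$. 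This drops a nonnegative term and does not even need the uncorrelatedness, although conditional uncorrelatedness makes the identity clean. Putting the pieces together gives $\mathbb{E}[D^{(t+1)}\mid\mathcal{F}_t]\le q_t^2D^{(t)}+\sigma_\varepsilon^2$.

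The main obstacle is the heterogeneity term in the pairwise expansion. One must choose the split so that the factor multiplying $\eta_i-\eta_j$ is a distance to the winner rather than a pairwise distance. Otherwise the bound would involve the diameter instead of $R_t$. The symmetric choice $(1-\eta_i)(x_i-x_j)+(\eta_i-\eta_j)(w-x_j)$ achieves this, and the Lipschitz bound on $\eta_i-\eta_j$ is what converts the error into a multiple of $\|x_i-x_j\|$, preserving the multiplicative form. One subtlety to remark on is that $q_t$ may exceed $1$ when $R_t$ is large, in which case the bound still holds but is not a contraction.
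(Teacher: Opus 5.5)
Your proof is correct, and on the deterministic part it is the paper's argument. The paper uses the same decomposition of $x_i^{(t+1)}-x_j^{(t+1)}$ into $(1-\eta_i)\Delta_{ij}$ plus a heterogeneity term along $w-x_j$. It bounds it the same way, via $1-\eta_i\le 1-\eta_{\min}$, Lipschitzness of $g$, and $\|w-x_j\|\le R_t$. Your coefficient $\eta_i-\eta_j$ is the correct one; the paper writes $\eta_j-\eta_i$, which is harmless under norms. The paper also leaves implicit the reverse-triangle step that you spell out.

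The genuine difference is in the noise step. The paper stays pairwise throughout. It sets $\xi_{ij}=\varepsilon_i-\varepsilon_j$ and kills the cross term pair by pair. It then uses conditional uncorrelatedness across voters to get $\mathbb{E}[\|\xi_{ij}\|^2\mid\mathcal{F}_t]\le 2\sigma_\varepsilon^2$ for each pair. Summing over the $n(n-1)$ ordered pairs and dividing by $2n^2$ gives a noise floor of $\frac{n-1}{n}\sigma_\varepsilon^2\le\sigma_\varepsilon^2$.

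You instead split off the $\mathcal{F}_t$-measurable part $y_i$ and treat the noise in centered form. The pointwise identity $\frac1n\sum_i\|\varepsilon_i-\bar\varepsilon\|^2=\frac1n\sum_i\|\varepsilon_i\|^2-\|\bar\varepsilon\|^2$ gives the same $\sigma_\varepsilon^2$ floor with no assumption on cross-voter correlation. So your route proves the theorem under a strictly weaker noise hypothesis. With negatively correlated noise the paper's per-pair bound can fail: $\varepsilon_2=-\varepsilon_1$ with $\mathbb{E}[\|\varepsilon_1\|^2\mid\mathcal{F}_t]=\sigma_\varepsilon^2$ gives $\mathbb{E}[\|\xi_{12}\|^2\mid\mathcal{F}_t]=4\sigma_\varepsilon^2$. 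The aggregate bound, as your identity shows, survives. The paper's route buys a single uniform pairwise computation; yours buys that extra generality at no cost.
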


The contraction factor $q_t$ is increasing in $R_t$: a winner farther from the
electorate periphery yields a weaker one-step contraction guarantee. In particular,
whenever $q_t < 1$, voter disagreement contracts in expectation at round $t$, up to
the irreducible noise floor $\sigma_\varepsilon^2$.

It is useful to decompose $q_t$ as
\[
  q_t
  = \underbrace{(1 - \eta_{\min})}_{\text{state-independent baseline}}
    + \underbrace{L_\eta R_t}_{\text{rule-dependent perturbation}}.
\]
The first term is a baseline contribution determined by the least responsive voter and
is unaffected by the electoral rule. The second term captures the rule-dependent part
of the bound through the realized winner radius $R_t$. Thus, shrinking $R_t$ tightens
the contraction bound, but the rule cannot overcome a baseline that is already close to
$1$ because of a very small $\eta_{\min}$.

\begin{proof}
Fix $i \ne j$ and write $\Delta_{ij} := x_i^{(t)} - x_j^{(t)}$ and
$\xi_{ij} := \varepsilon_i^{(t)} - \varepsilon_j^{(t)}$. Expanding the voter update
\eqref{eq:voter_update},
\[
  x_i^{(t+1)} - x_j^{(t+1)}
  = \underbrace{(1-\eta_i^{(t)})\Delta_{ij}
    + (\eta_j^{(t)}-\eta_i^{(t)})(w^{(t)}-x_j^{(t)})}_{=: A_{ij}}
    + \xi_{ij}.
\]
The term $A_{ij}$ is $\mathcal{F}_t$-measurable. We bound $\|A_{ij}\|$ using three
facts: (i) $\eta_i^{(t)} \ge \eta_{\min}$, so $(1-\eta_i^{(t)}) \le 1-\eta_{\min}$;
(ii) Lipschitzness of $g$ gives
$|\eta_j^{(t)}-\eta_i^{(t)}| \le L_\eta \|\Delta_{ij}\|$; and
(iii) $\|w^{(t)}-x_j^{(t)}\| \le R_t$ by definition. Together these imply
$\|A_{ij}\| \le q_t \|\Delta_{ij}\|$.

Since $\mathbb{E}[\xi_{ij} \mid \mathcal{F}_t] = 0$ and $A_{ij}$ is
$\mathcal{F}_t$-measurable, the cross term vanishes:
\[
  \mathbb{E}\!\left[
    \|x_i^{(t+1)} - x_j^{(t+1)}\|^2 \mid \mathcal{F}_t
  \right]
  = \|A_{ij}\|^2 + \mathbb{E}\!\left[\|\xi_{ij}\|^2 \mid \mathcal{F}_t\right]
  \le q_t^2 \|\Delta_{ij}\|^2 + 2\sigma_\varepsilon^2,
\]
where conditional uncorrelatedness gives
$\mathbb{E}[\|\xi_{ij}\|^2 \mid \mathcal{F}_t] \le 2\sigma_\varepsilon^2$.
Summing over all ordered pairs $(i,j)$ with $i \ne j$, dividing by $2n^2$, and
applying \eqref{eq:pairwise_var} gives the result.
\end{proof}

\begin{corollary}[Long-run voter disagreement]
\label{cor:voter_stability}
If $R_t \le R_*$ almost surely for all $t$ and
$q_* := 1 - \eta_{\min} + L_\eta R_* < 1$, then
\[
  \limsup_{t\to\infty}\,\mathbb{E}[D^{(t)}]
  \le \frac{\sigma_\varepsilon^2}{1-q_*^2}.
\]
When $\sigma_\varepsilon = 0$, voter disagreement vanishes geometrically at rate
$q_*^2$.
\end{corollary}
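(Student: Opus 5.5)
The plan is to turn the one-step conditional bound of Theorem~\ref{thm:main} into a linear recursion for the unconditional expectations and then iterate it. Set $a_t := \mathbb{E}[D^{(t)}]$.

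\textbf{Step 1: uniform one-step bound.} Theorem~\ref{thm:main} gives
\[
\mathbb{E}[D^{(t+1)}\mid\mathcal{F}_t] \le q_t^2 D^{(t)} + \sigma_\varepsilon^2 .
\]
Two facts let us replace $q_t$ by $q_*$. First, $q_t = 1-\eta_{\min} + L_\eta R_t$ is nondecreasing in $R_t$. Second, $q_t \ge 1-\eta_{\min} > 0$, because $\eta_{\min} \le \eta_{\max} < 1$, so squaring preserves the order. With $R_t \le R_*$ almost surely, this gives $0 < q_t \le q_*$ and hence $q_t^2 \le q_*^2$ almost surely. Since $D^{(t)} \ge 0$, we get
\[
\mathbb{E}[D^{(t+1)}\mid\mathcal{F}_t] \le q_*^2 D^{(t)} + \sigma_\varepsilon^2
\]
almost surely.

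\textbf{Step 2: take expectations.} Integrability is not an issue, because Assumption~\ref{ass:bounded} ensures each $D^{(t)}$ is bounded by $\mathrm{diam}(\Omega)^2$. One caveat: the noise could in principle push voters outside $\Omega$. In that case I would instead note that $\mathbb{E}[D^{(t)}]$ is finite by induction, using the second-moment bound on the noise together with the previous step's bound. Applying the tower property then yields the recursion
\[
a_{t+1} \le q_*^2 a_t + \sigma_\varepsilon^2 .
\]

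\textbf{Step 3: unroll.} By induction,
\[
a_t \le q_*^{2t} a_0 + \sigma_\varepsilon^2 \sum_{k=0}^{t-1} q_*^{2k} \le q_*^{2t} a_0 + \frac{\sigma_\varepsilon^2}{1-q_*^2},
\]
where the geometric sum converges because $0 < q_*^2 < 1$. Letting $t\to\infty$ kills the first term, since $a_0 = D^{(0)}$ is finite. This gives the stated $\limsup$. When $\sigma_\varepsilon = 0$, the same display reads $a_t \le q_*^{2t} a_0$, which is geometric decay at rate $q_*^2$.

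\textbf{Where the care is needed.} The argument is mostly bookkeeping. The step needing most attention is Step 1: we must check that $q_t$ is nonnegative so that $q_t \le q_*$ implies $q_t^2 \le q_*^2$, and that the almost-sure bound on $R_t$ holds at every $t$, so that it passes through the conditional expectation.

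If the author intends pathwise geometric decay in the noiseless case, a short extra step is needed. With $\sigma_\varepsilon=0$ the dynamics are deterministic given the winner sequence, so the same recursion holds pathwise: $D^{(t+1)} \le q_*^2 D^{(t)}$ directly.
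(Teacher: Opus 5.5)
Your proposal is correct and follows the paper's route. The paper uses the almost-sure bound $q_t \le q_*$ in Theorem~\ref{thm:main}, takes expectations, and iterates the resulting drift inequality $\mathbb{E}[D^{(t+1)}] \le q_*^2\,\mathbb{E}[D^{(t)}] + \sigma_\varepsilon^2$, as in the proof of Theorem~\ref{thm:long_run_bounds} and the drift-inequality paragraph of Appendix~\ref{app:math}; your checks on positivity of $q_t$, integrability, and the pathwise noiseless case are sound details the paper leaves implicit.
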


The corollary shows that if a rule keeps the winner radius uniformly bounded away from
the electorate periphery, then long-run voter disagreement remains bounded. The term
$\sigma_\varepsilon^2/(1-q_*^2)$ is the corresponding noise floor.

\begin{theorem}[Rule comparison via radius envelopes]
\label{thm:long_run_bounds}
Let rules $s_1$ and $s_2$ share the same constants and initial disagreement $D^{(0)}$. 
Suppose rule $s_k$ satisfies $R_t^{(s_k)} \le \bar{R}_k$ almost surely for all $t$, 
with $\bar{R}_1 \le \bar{R}_2$, and let $a_k := (1 - \eta_{\min} + L_\eta 
\bar{R}_k)^2$. If $a_2 < 1$, then for all $t \ge 0$:
\begin{equation}
  \mathbb{E}[D_{s_k}^{(t)}]
  \;\le\; a_k^t D^{(0)} + \frac{\sigma_\varepsilon^2(1-a_k^t)}{1-a_k},
  \label{eq:long_run_bound}
\end{equation}
and therefore $\limsup_{t\to\infty} \mathbb{E}[D_{s_1}^{(t)}] \le 
\sigma_\varepsilon^2/(1-a_1) \le \sigma_\varepsilon^2/(1-a_2)$. If the voter-state 
chain is time-homogeneous Markov, an invariant distribution $\pi_k$ exists and 
satisfies $\mathbb{E}_{\pi_k}[D] \le \sigma_\varepsilon^2/(1-a_k)$.
\end{theorem}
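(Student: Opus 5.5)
The proof iterates the one-step bound of Theorem~\ref{thm:main} along each rule's trajectory, then compares the resulting envelopes.

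\textbf{Step 1: a deterministic one-step factor.} For rule $s_k$, the almost-sure envelope $R_t^{(s_k)} \le \bar{R}_k$ gives $q_t \le 1-\eta_{\min}+L_\eta\bar{R}_k$. Both sides are nonnegative, since $1-\eta_{\min} > 0$. Hence $q_t^2 \le a_k$ almost surely. Theorem~\ref{thm:main} then yields
\[
  \mathbb{E}[D^{(t+1)}_{s_k}\mid\mathcal{F}_t] \le a_k D^{(t)}_{s_k} + \sigma_\varepsilon^2 .
\]
Taking total expectations via the tower property gives the scalar recursion $m_{t+1}\le a_k m_t+\sigma_\varepsilon^2$, where $m_t := \mathbb{E}[D^{(t)}_{s_k}]$. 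Boundedness of $\Omega$ (Assumption~\ref{ass:bounded}) ensures every $m_t$ is finite.

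\textbf{Step 2: unrolling the recursion.} By induction on $t$, using $m_0 = D^{(0)}$ and the fact that $a_k\ge 0$ preserves inequalities,
\[
  m_t \le a_k^t D^{(0)} + \sigma_\varepsilon^2\sum_{s=0}^{t-1}a_k^s .
\]
The geometric sum equals $(1-a_k^t)/(1-a_k)$. This step needs $a_k<1$. Since $\bar{R}_1\le\bar{R}_2$ and $a_k$ is monotone in $\bar{R}_k$, we have $a_1\le a_2<1$, so the condition holds for both rules. This proves \eqref{eq:long_run_bound}. Letting $t\to\infty$ kills the term $a_k^tD^{(0)}$ and gives $\limsup_t m_t\le\sigma_\varepsilon^2/(1-a_k)$. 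Monotonicity of $a\mapsto 1/(1-a)$ on $[0,1)$, together with $a_1\le a_2$, gives the stated comparison.

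\textbf{Step 3: the invariant-distribution claim.} This is the main obstacle. For existence, the state space $\Omega^n$ is compact. I would argue via Krylov--Bogoliubov: Cesàro averages of the laws of the state form a tight family and have a weak limit point, which is invariant provided the kernel is Feller. The Feller property is not explicitly assumed. I would invoke it through continuity of the update in the state, which is plausible if $W_s$ and the noise law depend continuously on the state. Otherwise I would state it as a standing regularity condition implicit in ``time-homogeneous Markov''. A further subtlety is that the noise could push voters outside $\Omega$. I would read the model as keeping $x^{(t)}\in\Omega$, as in Section~\ref{sec:model}.

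For the bound, start the chain from $\pi_k$. Stationarity gives $\mathbb{E}_{\pi_k}[D^{(t)}]=\mathbb{E}_{\pi_k}[D]$ for all $t$. Integrating the Step~1 inequality over $\pi_k$ gives $\mathbb{E}_{\pi_k}[D]\le a_k\mathbb{E}_{\pi_k}[D]+\sigma_\varepsilon^2$. The quantity $\mathbb{E}_{\pi_k}[D]$ is finite because $D\le\mathrm{diam}(\Omega)^2$. Rearranging with $a_k<1$ gives $\mathbb{E}_{\pi_k}[D]\le\sigma_\varepsilon^2/(1-a_k)$.

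This integration requires the radius envelope to hold $\pi_k$-almost surely. I would justify that by noting the envelope is assumed almost surely from all starting states, or by restricting $\pi_k$ to the closed set where it holds. An alternative route is lower semicontinuity of $\mu \mapsto \mathbb{E}_\mu[D]$ along the Cesàro limit, combined with the uniform bound from Step~2. This route avoids integrating under $\pi_k$ directly, since $D$ is continuous and bounded on $\Omega^n$.
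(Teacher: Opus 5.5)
Your proposal is correct and follows the same route as the paper. You plug the almost-sure envelope into Theorem~\ref{thm:main} to get $\mathbb{E}[D^{(t+1)}_{s_k}\mid\mathcal{F}_t]\le a_k D^{(t)}_{s_k}+\sigma_\varepsilon^2$, iterate after taking expectations, obtain $\pi_k$ from compactness plus Feller continuity, and integrate the drift inequality in stationarity. Your two caveats are real points that the paper's proof passes over silently: the Feller property is not guaranteed by the stated assumptions (for example, $W_s$ can be discontinuous for winner-take-all rules), and the envelope must hold $\pi_k$-a.s. Your Cesàro-limit argument, with $\pi_k$ built from the given initial condition, is a clean way to avoid the second issue.
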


\begin{proof}
Applying Theorem~\ref{thm:main} with $R_t^{(s_k)} \le \bar{R}_k$ gives 
$\mathbb{E}[D_{s_k}^{(t+1)} \mid \mathcal{F}_t] \le a_k D_{s_k}^{(t)} + 
\sigma_\varepsilon^2$. Taking expectations and iterating yields 
\eqref{eq:long_run_bound}. Existence of $\pi_k$ follows from compactness of $\Omega$ 
and Feller continuity~\citep{MeynTweedie}; applying the drift inequality in 
stationarity and rearranging gives the stationary bound.
\end{proof}

Theorem~\ref{thm:long_run_bounds} gives a practical comparison principle: \emph{a 
rule with a smaller winner-radius envelope has a tighter long-run disagreement bound}. 
This is a bound comparison, not a claim that realized expectations are ordered and we examine this empirically in Section~\ref{sec:simulations}.

\paragraph{Winner radius vs.\ voter median.}
A natural question is whether minimizing a standard one-shot centrality metric (e.g. distance to the voter median) is equivalent to minimizing $R_t$. It is not. The voter median minimizes the \emph{sum} of distances, while $R_t$ is minimized by the \emph{Chebyshev center}, which minimizes the 
\emph{worst-case} distance. These are different optimization problems with different solutions in general.

Let's consider an example. Let $\Omega = [0,1]$ and $x = \{0, 0.8, 0.9, 1\}$. The median is $m = 0.85$, so the  median winner is $w_1 = 0.85$ with $R(w_1) = 0.85$. The Chebyshev center is $w_2 = 0.5$ with $R(w_2) = 0.5$. Since $q_t$ is strictly increasing in $R_t$, rule $w_2$  gives a tighter one-step contraction bound despite being farther from the median.  Favoring the median winner over the Chebyshev center thus \emph{weakens} the 
depolarization guarantee.

\subsection{Candidate Dispersion and the Supporter Centroid Radius}

The candidate-side analogue of the winner radius is the \emph{supporter centroid 
radius}:
\[
S_t := \max_{1 \le j \le K} \|c_j^{(t)} - s_j^{(t)}\|,
\]
the largest gap between any candidate and their own supporter centroid. A small $S_t$ 
means every candidate is close to their support base; a large $S_t$ means at least one 
candidate has drifted far from their supporters. We will show that $S_t$ plays the same 
role in controlling candidate dispersion that $R_t$ plays in controlling voter 
disagreement.

Deriving a contraction bound on the candidate side requires one additional structural 
condition: supporter centroids must move continuously with candidate positions. Without 
this, a small shift in a candidate's platform can cause a discrete jump in their 
supporter base (as happens in plurality voting when a candidate crosses a Voronoi 
boundary), making the bound intractable.

\begin{assumption}[Supporter spread Lipschitz continuity]
\label{ass:supporter_spread}
There exists $L_s \ge 0$ such that for every candidate pair $j, l$ at time $t$:
\[
\|s_j^{(t)} - s_l^{(t)}\| \le L_s \|c_j^{(t)} - c_l^{(t)}\|.
\]
\end{assumption}

This says that if two candidates are close to each other in policy space, their 
supporter centroids are also close. The constant $L_s$ measures how tightly supporter 
centroids track candidate positions: $L_s = 0$ means centroids are insensitive to 
candidate movement; large $L_s$ means small candidate shifts cause large centroid 
shifts.

\paragraph{Scope.} Assumption~\ref{ass:supporter_spread} holds naturally for rules with 
smooth assignment (where $\alpha_{ij}$ depends continuously on $\|x_i - c_j\|$, such 
as Score or the Fractional benchmark), but fails for hard Voronoi assignment such as 
Plurality: moving $c_j$ across a Voronoi boundary causes $s_j$ to jump discontinuously, 
so no finite $L_s$ exists. A Wasserstein-Lipschitz relaxation would recover the bound 
for hard-assignment rules with an $L_s$ depending on a smoothing scale, but we do not 
pursue this here. Proposition~\ref{prop:candidate_main} should therefore be read as a 
bound for smooth-assignment rules; Plurality is excluded.

\begin{proposition}[Candidate contraction bound]
\label{prop:candidate_main}
Under Assumptions~\ref{ass:bounded}, \ref{ass:candidate_attraction}, 
\ref{ass:noise}, and~\ref{ass:supporter_spread}, with $\mu = 0$, the following two
bounds hold.

\emph{(i) No-repulsion case ($\nu = 0$):}
\[
  \mathbb{E}\!\left[P^{(t+1)} \mid \mathcal{G}_t\right]
  \;\le\; \widetilde{p}_t^{\,2}\, P^{(t)} + \sigma_\delta^2,
  \qquad
  \widetilde{p}_t \;:=\; 1 - \lambda_{\min} + L_h S_t + \lambda_{\max} L_s.
\]

\emph{(ii) With repulsion ($\nu > 0$):}
\[
  \mathbb{E}\!\left[P^{(t+1)} \mid \mathcal{G}_t\right]
  \;\le\; 2\widetilde{p}_t^{\,2}\, P^{(t)} + C_{\mathrm{noise}},
\]
with $C_{\mathrm{noise}} := 8\nu^2\rho^2 + \sigma_\delta^2$.
\end{proposition}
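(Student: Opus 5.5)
We mirror the proof of Theorem~\ref{thm:main}: we work with the pairwise representation $P^{(t)} = \frac{1}{2K^2}\sum_{j,l}\|c_j^{(t)}-c_l^{(t)}\|^2$ and bound each pairwise difference after one update. Fix $j\ne l$ and write $\Delta_{jl} := c_j^{(t)}-c_l^{(t)}$. With $\mu=0$, the candidate update \eqref{eq:candidate_update} gives
\[
c_j^{(t+1)}-c_l^{(t+1)} = \underbrace{(1-\lambda_j)\Delta_{jl} + (\lambda_j-\lambda_l)(s_l^{(t)}-c_l^{(t)}) + \lambda_j\bigl(s_j^{(t)}-s_l^{(t)}\bigr)}_{=:B_{jl}} + \nu\bigl(r_j^{(t)}-r_l^{(t)}\bigr) + \bigl(\delta_j^{(t)}-\delta_l^{(t)}\bigr).
\]
To obtain this, we add and subtract $\lambda_j(s_l-c_l)$ and regroup: $\lambda_j(s_j-c_j)-\lambda_l(s_l-c_l) = \lambda_j(s_j-s_l) - \lambda_j\Delta_{jl} + (\lambda_j-\lambda_l)(s_l-c_l)$.

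The term $B_{jl}$ is $\mathcal{G}_t$-measurable. We bound its three pieces separately.
\begin{itemize}
\item $(1-\lambda_j)\le 1-\lambda_{\min}$.
\item By Lipschitzness of $h$ and the reverse triangle inequality, $|\lambda_j-\lambda_l|\le L_h\bigl|\|c_j-s_j\|-\|c_l-s_l\|\bigr|$. The step that needs care is converting this into a multiple of $\|\Delta_{jl}\|$, since we only have $\bigl|\|c_j-s_j\|-\|c_l-s_l\|\bigr|\le \|\Delta_{jl}\|+\|s_j-s_l\|\le(1+L_s)\|\Delta_{jl}\|$. Combined with $\|s_l-c_l\|\le S_t$, this gives a contribution $L_h(1+L_s)S_t\|\Delta_{jl}\|$.
\item By Assumption~\ref{ass:supporter_spread}, $\lambda_j\|s_j-s_l\|\le\lambda_{\max}L_s\|\Delta_{jl}\|$.
\end{itemize}
This yields $\|B_{jl}\|\le(1-\lambda_{\min}+L_h(1+L_s)S_t+\lambda_{\max}L_s)\|\Delta_{jl}\|$. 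This is the main obstacle: the stated factor $\widetilde p_t$ has $L_hS_t$ rather than $L_h(1+L_s)S_t$. To match it, I would first try to treat $h$ as a function of the candidate's own gap in a way that tightens the Lipschitz step. Failing that, I would read the extra $L_hL_sS_t$ as absorbed by interpreting $L_h$ as the Lipschitz constant of $c\mapsto h(\|c-s(c)\|)$. I would flag this constant-tracking point explicitly rather than hide it.

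For case (i), $\nu=0$, the argument finishes exactly as in Theorem~\ref{thm:main}. Conditional mean-zero noise kills the cross term, so
\[
\mathbb{E}\bigl[\|c_j^{(t+1)}-c_l^{(t+1)}\|^2\mid\mathcal{G}_t\bigr]\le\widetilde p_t^{\,2}\|\Delta_{jl}\|^2+2\sigma_\delta^2 .
\]
Summing over ordered pairs and dividing by $2K^2$ gives $\widetilde p_t^{\,2}P^{(t)}+\sigma_\delta^2$.

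For case (ii), $r_j^{(t)}$ may depend on current positions and so is $\mathcal{G}_t$-measurable, but it is not proportional to $\Delta_{jl}$. We therefore group $B_{jl}+\nu(r_j-r_l)$ as the measurable part and apply $\|a+b\|^2\le 2\|a\|^2+2\|b\|^2$. Using $\|\nu(r_j-r_l)\|\le 2\nu\rho$, this gives
\[
\|B_{jl}+\nu(r_j-r_l)\|^2\le 2\widetilde p_t^{\,2}\|\Delta_{jl}\|^2+8\nu^2\rho^2 .
\]
The noise is still orthogonal in conditional expectation and contributes $2\sigma_\delta^2$ per pair. Averaging over pairs with weight $1/(2K^2)$ turns the per-pair constant $8\nu^2\rho^2+2\sigma_\delta^2$ into at most $4\nu^2\rho^2+\sigma_\delta^2\le C_{\mathrm{noise}}$, which gives (ii).
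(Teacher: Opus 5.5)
Your decomposition is the same as the paper's: your $B_{jl}$ is its $N_{jl}$. The surrounding structure also matches: the pairwise form of $P^{(t)}$, orthogonality of the noise, and $(a+b)^2\le 2a^2+2b^2$ for the repulsion case. The one place you depart is the step you flag, and there you are right and the paper is not. Its proof asserts that Lipschitzness of $h$ gives $|\lambda_l^{(t)}-\lambda_j^{(t)}|\le L_h\|\Xi_{jl}\|$. But $\lambda_j$ and $\lambda_l$ are $h$ evaluated at $\|c_j-s_j\|$ and $\|c_l-s_l\|$, which are measured from \emph{different} centroids. So only $|\lambda_j-\lambda_l|\le L_h(\|\Xi_{jl}\|+\|s_j-s_l\|)\le L_h(1+L_s)\|\Xi_{jl}\|$ follows. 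On the voter side both distances are measured from the common winner $w^{(t)}$, which is why the analogous step in Theorem~\ref{thm:main} is fine. Your factor $1-\lambda_{\min}+L_h(1+L_s)S_t+\lambda_{\max}L_s$ is what the argument actually yields.

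The remaining gap in your proposal is that you never reach the stated $\widetilde p_t$, and it cannot be closed: the stated bound is false under the stated assumptions, even when $\widetilde p_t<1$. Here is a counterexample.
Take $d=1$, $K=2$, $\Omega=[-1,1]$, $\mu=\nu=0$ and zero noise ($\sigma_\delta=0$).
Set $c_1^{(t)}=0.01$, $c_2^{(t)}=0$, $s_1^{(t)}=-0.504$, $s_2^{(t)}=-0.5$; for instance, place a voter at each of these two points, assigned wholly to candidates 1 and 2 respectively. Then Assumption~\ref{ass:supporter_spread} holds with $L_s=0.4$.
Let $h(x)=\min\{0.95,\max\{0.936,\,x+0.436\}\}$, so $L_h=1$, $\lambda_{\min}=0.936$ and $\lambda_{\max}=0.95$.
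The gaps are $0.514$ and $0.5$, so $\lambda_1=0.95$, $\lambda_2=0.936$, $S_t=0.514$, and $\widetilde p_t=0.064+0.514+0.38=0.958$.
Yet $c_1^{(t+1)}=0.01-0.95\cdot 0.514=-0.4783$ and $c_2^{(t+1)}=-0.468$. Hence $|c_1^{(t+1)}-c_2^{(t+1)}|=1.03\,|c_1^{(t)}-c_2^{(t)}|$, and $P^{(t+1)}\approx 1.061\,P^{(t)}>0.918\,P^{(t)}\approx\widetilde p_t^{\,2}P^{(t)}$.

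So neither of your rescue routes works. No sharper use of Assumption~\ref{ass:candidate_attraction} can succeed. Reading $L_h$ as the Lipschitz constant of $c\mapsto h(\|c-s(c)\|)$ silently replaces that assumption with the stronger hypothesis $|\lambda_j-\lambda_l|\le L_h\|c_j-c_l\|$, which is what the paper's proof tacitly uses.

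The correct statement is (i)--(ii) with $L_hS_t$ replaced by $L_h(1+L_s)S_t$; in the example this gives $\approx 1.164\ge 1.03$. With that change your proof is complete. Your constant $4\nu^2\rho^2+\sigma_\delta^2$ in (ii) is also valid and sharper than $C_{\mathrm{noise}}$. The same correction carries over to Corollary~\ref{cor:candidate_stability}.
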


The contraction factor $\widetilde{p}_t$ increases with $S_t$: a larger gap between 
candidates and their supporter bases slows the contraction of candidate dispersion. The 
term $\lambda_{\max} L_s$ is a structural constant of the rule reflecting how much 
supporter centroids move when candidates move; it does not depend on $S_t$ and sets a 
baseline contraction rate independent of the current state. In the no-repulsion case,
the bound is a direct parallel to Theorem~\ref{thm:main}: stability is achieved whenever
$\widetilde{p}_t < 1$, matching the voter-side threshold $q_t < 1$.

\begin{proof}
Fix $j \ne l$ and write $\Xi_{jl} := c_j^{(t)} - c_l^{(t)}$ and $\zeta_{jl} := 
\delta_j^{(t)} - \delta_l^{(t)}$. With $\mu = 0$, the candidate update 
\eqref{eq:candidate_update} gives:
\[
  c_j^{(t+1)} - c_l^{(t+1)}
  = \underbrace{(1-\lambda_j^{(t)})\Xi_{jl} + \lambda_j^{(t)}(s_j^{(t)} - s_l^{(t)}) 
    + (\lambda_l^{(t)} - \lambda_j^{(t)})(c_l^{(t)} - s_l^{(t)})}_{=:\, N_{jl}}
    + \underbrace{\nu(r_j^{(t)} - r_l^{(t)})}_{=:\, R_{jl}}
   + \zeta_{jl}.
\]
Both $N_{jl}$ and $R_{jl}$ are $\mathcal{G}_t$-measurable. We bound $\|N_{jl}\|$ using four 
facts: (i) $1 - \lambda_j^{(t)} \le 1 - \lambda_{\min}$; (ii) Lipschitzness of $h$ 
gives $|\lambda_l^{(t)} - \lambda_j^{(t)}| \le L_h\|\Xi_{jl}\|$; (iii) 
Assumption~\ref{ass:supporter_spread} gives $\|s_j^{(t)} - s_l^{(t)}\| \le 
L_s\|\Xi_{jl}\|$; and (iv) $\|c_l^{(t)} - s_l^{(t)}\| \le S_t$. Together these give 
$\|N_{jl}\| \le \widetilde{p}_t\|\Xi_{jl}\|$. Also $\|R_{jl}\| \le 2\nu\rho$ by the bound
on $\|r_j^{(t)}\|$.

Since $\mathbb{E}[\zeta_{jl} \mid \mathcal{G}_t] = 0$ and $N_{jl} + R_{jl}$ is 
$\mathcal{G}_t$-measurable, the cross term vanishes:
\[
  \mathbb{E}\!\left[\|c_j^{(t+1)} - c_l^{(t+1)}\|^2 \mid \mathcal{G}_t\right]
  = \|N_{jl} + R_{jl}\|^2 + \mathbb{E}[\|\zeta_{jl}\|^2 \mid \mathcal{G}_t].
\]

\emph{Case (i), $\nu = 0$.} Then $R_{jl} = 0$, so $\|N_{jl} + R_{jl}\|^2 = \|N_{jl}\|^2 
\le \widetilde{p}_t^2 \|\Xi_{jl}\|^2$. Using $\mathbb{E}[\|\zeta_{jl}\|^2 \mid \mathcal{G}_t] 
\le 2\sigma_\delta^2$, summing over ordered pairs, and dividing by $2K^2$ gives 
$\mathbb{E}[P^{(t+1)} \mid \mathcal{G}_t] \le \widetilde{p}_t^2 P^{(t)} + \sigma_\delta^2$.

\emph{Case (ii), $\nu > 0$.} Apply $(a+b)^2 \le 2a^2 + 2b^2$ with $a = \|N_{jl}\|$ and 
$b = \|R_{jl}\|$ to obtain $\|N_{jl} + R_{jl}\|^2 \le 2\widetilde{p}_t^2 \|\Xi_{jl}\|^2 
+ 8\nu^2\rho^2$. Summing and dividing as above gives the stated bound.
\end{proof}

\paragraph{Comparison with the voter-side bound.} 
In the no-repulsion case, Proposition~\ref{prop:candidate_main}(i) is a direct parallel 
to Theorem~\ref{thm:main}: stability holds whenever $\widetilde{p}_t < 1$, and the bound 
has the same $\widetilde{p}_t^2 P^{(t)} + \sigma_\delta^2$ form as its voter-side 
counterpart. With repulsion, the factor of $2$ appears from the $(a+b)^2 \le 2a^2 + 2b^2$ 
step used to separate deterministic and repulsion contributions, tightening the stability 
condition to $\widetilde{p}_* < 1/\sqrt{2}$. The only irreducible asymmetry with the 
voter-side bound is the requirement of Assumption~\ref{ass:supporter_spread}, which 
excludes hard-assignment rules.

\begin{corollary}[Long-run candidate dispersion]
\label{cor:candidate_stability}
Suppose $S_t \le S_*$ almost surely for all $t$, and let $\widetilde{p}_* := 1 - \lambda_{\min} 
+ L_h S_* + \lambda_{\max} L_s$.

\emph{(i)} If $\nu = 0$ and $\widetilde{p}_* < 1$, then $\sup_t \mathbb{E}[P^{(t)}] \le 
\sigma_\delta^2 / (1 - \widetilde{p}_*^2)$. When additionally $\sigma_\delta = 0$, 
candidate dispersion vanishes geometrically at rate $\widetilde{p}_*^2$.

\emph{(ii)} If $\nu > 0$ and $\widetilde{p}_* < 1/\sqrt{2}$, then $\sup_t \mathbb{E}[P^{(t)}] 
\le C_{\mathrm{noise}} / (1 - 2\widetilde{p}_*^2) < \infty$, where $C_{\mathrm{noise}} = 
8\nu^2\rho^2 + \sigma_\delta^2$.
\end{corollary}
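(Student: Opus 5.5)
We derive Corollary~\ref{cor:candidate_stability} from Proposition~\ref{prop:candidate_main} in the same way Corollary~\ref{cor:voter_stability} follows from Theorem~\ref{thm:main}: take expectations of the one-step conditional bound to get a deterministic linear recursion, then iterate it.

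The first step is to replace the state-dependent factor by a constant. Since $S_t \le S_*$ almost surely and $\widetilde{p}_t = 1-\lambda_{\min}+L_hS_t+\lambda_{\max}L_s$ is nondecreasing in $S_t$, we have $\widetilde{p}_t \le \widetilde{p}_*$ almost surely. All terms are nonnegative, so $\widetilde{p}_t^{\,2}\le \widetilde{p}_*^{\,2}$. In case (i), Proposition~\ref{prop:candidate_main}(i) then gives
\[
\mathbb{E}[P^{(t+1)}\mid\mathcal{G}_t]\le a P^{(t)}+b, \qquad a=\widetilde{p}_*^{\,2},\ b=\sigma_\delta^2.
\]
In case (ii), part (ii) gives the same form with $a=2\widetilde{p}_*^{\,2}$ and $b=C_{\mathrm{noise}}$. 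The tower property then yields $e_{t+1}\le a e_t + b$ for $e_t:=\mathbb{E}[P^{(t)}]$. This quantity is finite because $\Omega$ is compact, so $P^{(t)}\le \mathrm{diam}(\Omega)^2$.

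Iterating the recursion gives
\[
e_t \le a^t e_0 + b\,\frac{1-a^t}{1-a},
\]
and the hypotheses $\widetilde{p}_*<1$, respectively $\widetilde{p}_*<1/\sqrt2$, are exactly $a<1$. The geometric-decay claim in (i) is immediate: with $\sigma_\delta=0$ we get $e_t\le \widetilde{p}_*^{\,2t}e_0$.

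The main obstacle is that the corollary asserts a bound on $\sup_t$, not on $\limsup_t$. The iterate is $\le \max\{e_0,\, b/(1-a)\}$, so it meets $b/(1-a)$ only if the initial dispersion already lies below the noise floor. I would handle this in one of two ways:
\begin{itemize}
\item add the hypothesis $P^{(0)}\le b/(1-a)$, in which case induction gives $e_{t+1}\le a\cdot b/(1-a)+b=b/(1-a)$;
\item or state the conclusion as $\limsup_t \mathbb{E}[P^{(t)}]\le b/(1-a)$, together with the explicit finite-time bound above, which shows $\sup_t e_t<\infty$.
\end{itemize}
The finiteness claim $<\infty$ in (ii) holds in either case. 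Throughout, the constant $S_*$ is understood to satisfy the hypotheses of Proposition~\ref{prop:candidate_main}, in particular $\mu=0$ and Assumption~\ref{ass:supporter_spread}.
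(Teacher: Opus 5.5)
Your proof is correct and follows the route the paper intends. The paper gives no separate proof, but the corollary is meant to follow from Proposition~\ref{prop:candidate_main} by bounding $\widetilde p_t\le\widetilde p_*$, taking expectations and iterating the drift inequality, exactly as Theorem~\ref{thm:long_run_bounds} does on the voter side.

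Your objection to the $\sup_t$ form is also right, and can be stated more strongly: the bound is false as written, not merely out of reach of this argument. At $t=0$ it requires $P^{(0)}\le\sigma_\delta^2/(1-\widetilde p_*^{\,2})$, so it fails whenever the initial dispersion exceeds the noise floor. With $\sigma_\delta=0$ it would even force $P^{(0)}=0$, making the geometric-decay clause of (i) vacuous. The correct statement is your $\limsup$ version together with the explicit finite-time bound, which matches Corollary~\ref{cor:voter_stability} and \eqref{eq:long_run_bound}.
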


The corollary gives the same qualitative message as Corollary~\ref{cor:voter_stability}: 
a rule that keeps $S_t$ bounded drives candidate dispersion to a finite noise floor. In 
the no-repulsion case, the stability threshold and noise floor parallel the voter-side 
bound exactly; in the repulsion case, the $1/\sqrt{2}$ threshold and the larger constant 
$C_{\mathrm{noise}}$ reflect the separation step used in the proof.

\subsection{The Two Primitives Target Different Geometries}
\label{subsec:tradeoff}

The voter-side and candidate-side bounds are controlled by different geometric objects. 
On the voter side, the quantity a rule should minimize is the \emph{Chebyshev radius} 
of the electorate:
\[
R^*(x) := \min_{w \in \Omega} \max_i \|x_i - w\|,
\]
achieved by the Chebyshev center $w^*$ — the point in $\Omega$ minimizing the 
worst-case distance to any voter. This is a property of voter positions alone; the 
candidates do not appear.

On the candidate side, the quantity a rule should minimize is:
\[
S^*(x, C) := \inf_{A \in \Delta^K} \max_j \|c_j - s_j(A, x)\|,
\]
the smallest achievable maximum gap between any candidate and their supporter centroid, 
optimized over the choice of assignment map $A$. Unlike $R^*$, this couples voter and 
candidate positions through the assignment: moving candidates changes who their 
supporters are, which changes the centroids, which changes the gap.

These are different optimization problems over different variables. A rule that drives 
$R_t$ down — by placing the winner near the Chebyshev center of the electorate — 
tends to drive $S_t$ up, because a centrally placed winner pulls supporter centroids 
toward the center while candidates remain spread out. Conversely, a rule that keeps 
$S_t$ small — by electing one of the existing candidates so each candidate stays close 
to their own base — must choose $w \in C^{(t)}$, which constrains the winner away from 
the Chebyshev center and inflates $R_t$.

\section{Simulation Study}
\label{sec:simulations}

\subsection{Design}

Simulations use the \texttt{electoral\_sim} framework \citep{mukherjee2026electoral}. The main study places $n=900$ voters in a two-dimensional policy space for $T=20$ rounds. Three electorate profiles (Bridge conflict, Asymmetric resentment, Diffuse) are crossed with three camp-balance ratios (Original, 70:30, 50:50), two candidate slates (Centrist ladder, Polarized elites), three voter mechanisms (Consensus pull, Backlash, Sorting pressure), three candidate mechanisms (Static, Broad coalition chase, Base reinforcement), and seven rules (Plurality, IRV, Approval, Score, Condorcet, and two variants of a convex-combination benchmark from \citet{mukherjee2026electoral} with bandwidth $\sigma\in\{0.3, 1.0\}$, denoted Fractional below). This yields 1134 runs. A supplementary grid restricted to $\mu=0$ isolates the conditions of Proposition~\ref{prop:candidate_main} for rules with smooth assignment.

Under the sharp bound of Theorem~\ref{thm:main}, the voter-side contraction regime is $q_\ast = 1 - \eta_{\min} + L_\eta R_\ast < 1$, which is compatible with most of the simulated configurations: even with modest $\eta_{\min}$ and moderate $L_\eta$, winner radii $R_\ast$ up to the diameter of $\Omega$ leave $q_\ast$ below $1$. The comparison principle (smaller radius envelope $\Rightarrow$ tighter long-run bound) therefore has practical content across the grid, rather than applying only in a narrow corner of parameter space.

\subsection{Voter-side evidence}

Table~\ref{tab:rep_run} summarizes a representative ten-round run (Bridge conflict, Centrist ladder, Backlash voter dynamics, Broad coalition chase candidate dynamics with $\mu > 0$; the candidate-side entries are descriptive, since this configuration is outside Proposition~\ref{prop:candidate_main}). Plurality holds a winner radius near $0.9$ and produces negligible voter-side cooling ($\Delta D \approx -0.002$). The remaining systems reduce $R_{10}$ to the $0.52$--$0.54$ range and produce $\Delta D \approx -0.038$. The supporter-centroid radius moves the other way: Plurality's hard assignment gives a small $S_{10}$, while the smoother rules maintain larger values.

\begin{table}[htbp]
\centering
\caption{Representative ten-round run: Bridge conflict electorate, Centrist ladder slate, Backlash voter dynamics, Broad coalition chase candidate dynamics. Values illustrate the $(R_t, S_t)$ reversal discussed in Section~\ref{subsec:tradeoff}.}
\label{tab:rep_run}
\begin{tabular}{lcccc}
\toprule
System & $R_0$ & $R_{10}$ & $S_{10}$ & $\Delta D$ \\
\midrule
Plurality                  & 0.921 & 0.894 & 0.069 & $-0.002$ \\
Score                      & 0.610 & 0.540 & 0.272 & $-0.038$ \\
Condorcet                  & 0.606 & 0.533 & 0.088 & $-0.038$ \\
Fractional ($\sigma=0.30$) & 0.567 & 0.516 & 0.186 & $-0.038$ \\
Fractional ($\sigma=1.00$) & 0.564 & 0.517 & 0.270 & $-0.038$ \\
\bottomrule
\end{tabular}
\end{table}

Figure~\ref{fig:trajectories} shows median divergence from Plurality across all 1134 runs. The gradient of performance follows the structural distinction from Section~\ref{subsec:tradeoff}: winner-take-all rules (Plurality, IRV) constrain $w \in C^{(t)}$ and realize the largest $R_t$; rules that aggregate support more globally (Score, Condorcet) sit in the middle; and convex-combination rules, which can place the winner in $\conv(C)$, reach the smallest $R_t$. The same rules show smaller gains on winner-to-center distance, consistent with the discussion at the end of Section~\ref{sec:theory}: small $R_t$ and small $\norm{w - \xbar}$ are not the same objective.

\begin{figure}[htbp]
  \centering
  \includegraphics[width=\textwidth]{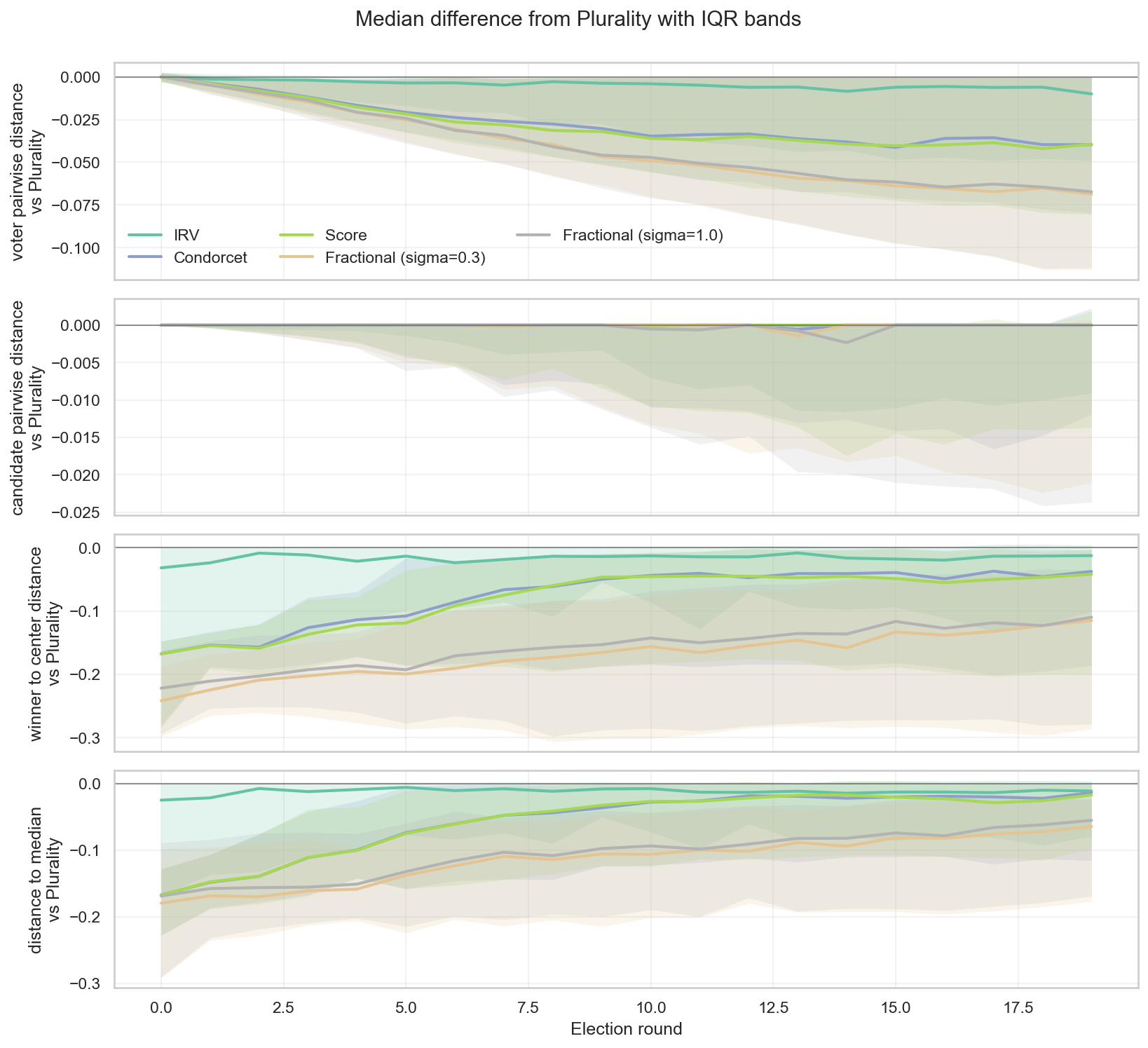}
  \caption{Median difference from Plurality with interquartile bands across 1134 runs: voter pairwise distance (top), candidate pairwise distance (second), winner-to-center distance (third), winner-to-median distance (bottom). Rules that reach smaller $R_t$ reduce polarization more, at the cost of smaller gains on winner-to-center distance.}
  \label{fig:trajectories}
\end{figure}

Figure~\ref{fig:heatmap_voter} disaggregates by mechanism combination. The contrast is sharpest under Sorting pressure, which keeps $\eta_{\min}$ small and therefore makes $q_t = 1-\eta_{\min}+L_\eta R_t$ most sensitive to $R_t$. Figure~\ref{fig:ratio_heatmap} disaggregates by camp balance. Plurality's depolarization advantage weakens from 70:30 to 50:50 because in skewed electorates the majority-camp winner happens to sit near the electorate mean---an artifact of camp imbalance, not a property of the rule. Convex-hull rules keep small $R_t$ across all balance settings.

\begin{figure}[htbp]
  \centering
  \includegraphics[width=\textwidth]{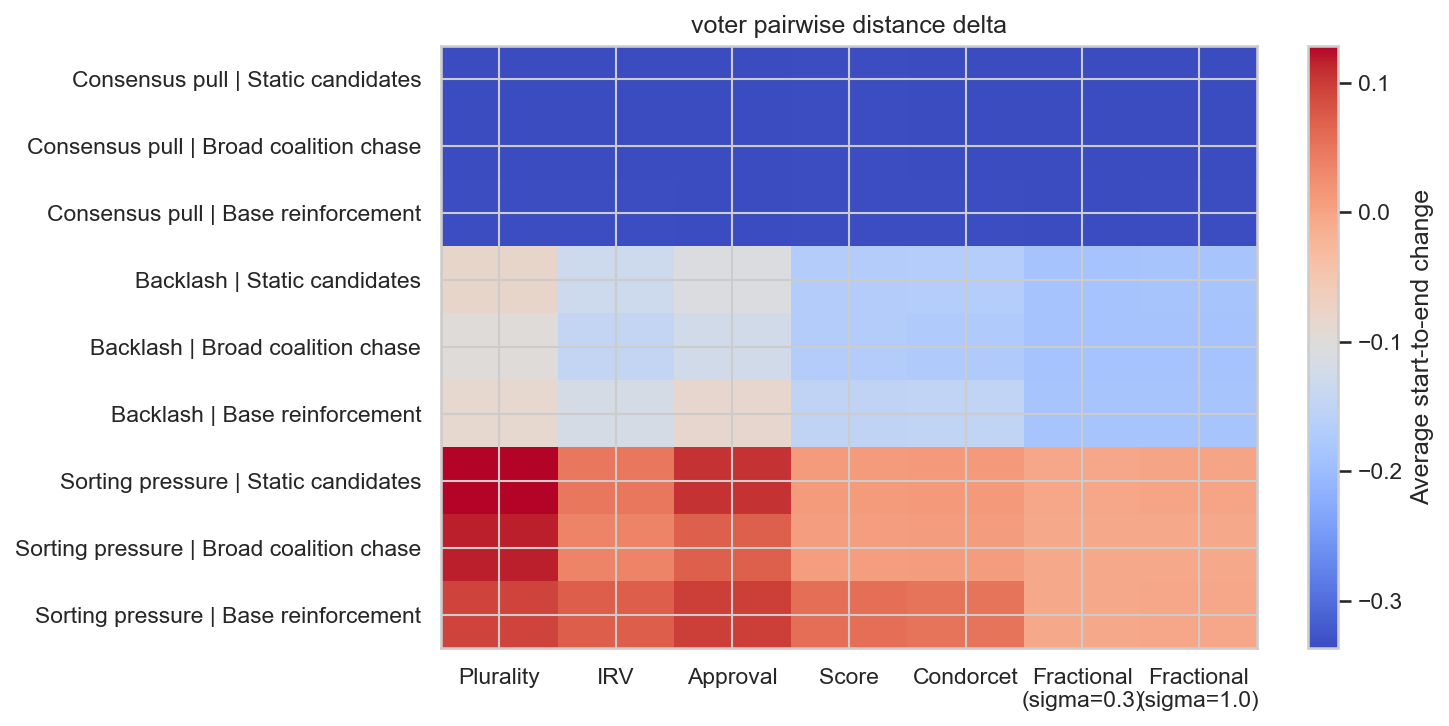}
  \caption{Average start-to-end change in voter pairwise distance by system and mechanism combination. Blue indicates depolarization. The column ranking reflects the realized $R_t$ ordering. Sorting pressure rows are most discriminating because $q_t$ is most sensitive to $R_t$ when backlash keeps $\eta_{\min}$ small.}
  \label{fig:heatmap_voter}
\end{figure}

\begin{figure}[htbp]
  \centering
  \includegraphics[width=0.85\textwidth]{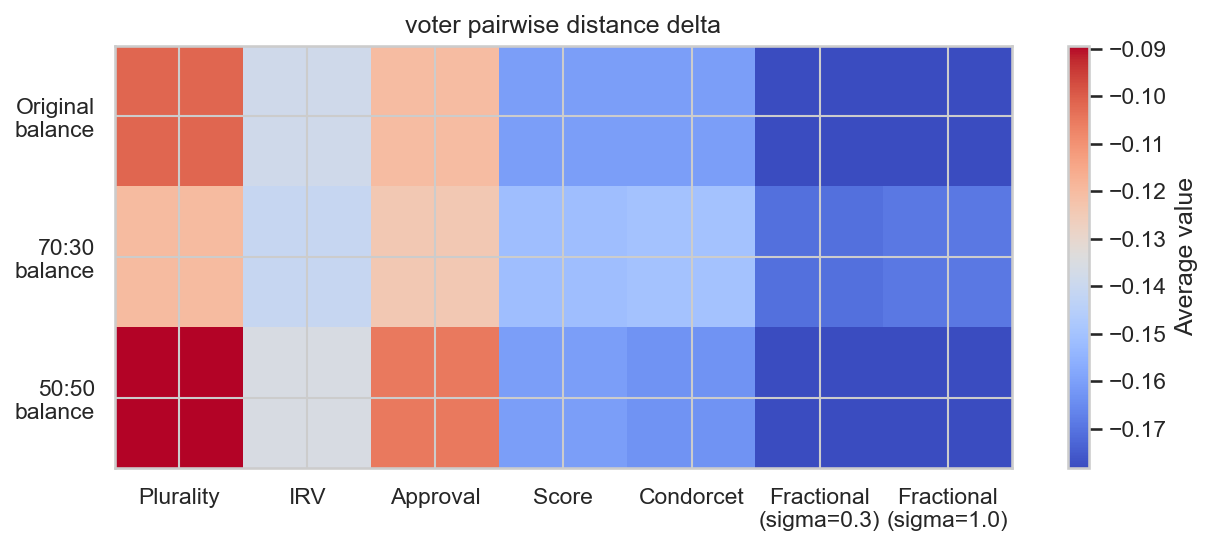}
  \caption{Mean voter pairwise distance change by system and camp-balance setting. Plurality's incidental $R_t$ advantage in 70:30 settings disappears at 50:50. Rules with the convex-hull constraint are camp-balance-robust.}
  \label{fig:ratio_heatmap}
\end{figure}

\subsection{Candidate-side evidence and the Pareto frontier}

Figure~\ref{fig:mu0} reports end-state $S_t$ under the $\mu=0$ supplementary grid. For rules with smooth assignment, this is the regime where Proposition~\ref{prop:candidate_main} applies; Plurality is reported descriptively, since Assumption~\ref{ass:supporter_spread} fails under hard Voronoi assignment (see the Scope remark in Section~\ref{sec:theory}). The ranking is the reverse of the $R_t$ ranking: rules with hard assignment produce the smallest end-state $S_t$ (Plurality: $0.083$ under Static, $0.156$ under Base reinforcement), while rules with smooth or aggregated assignment produce larger $S_t$ (up to $0.513$). This is the clearest empirical picture of the tradeoff from Section~\ref{subsec:tradeoff}: the rules that most reduce voter disagreement increase the gap between candidates and their supporter bases.

\begin{figure}[htbp]
  \centering
  \includegraphics[width=\textwidth]{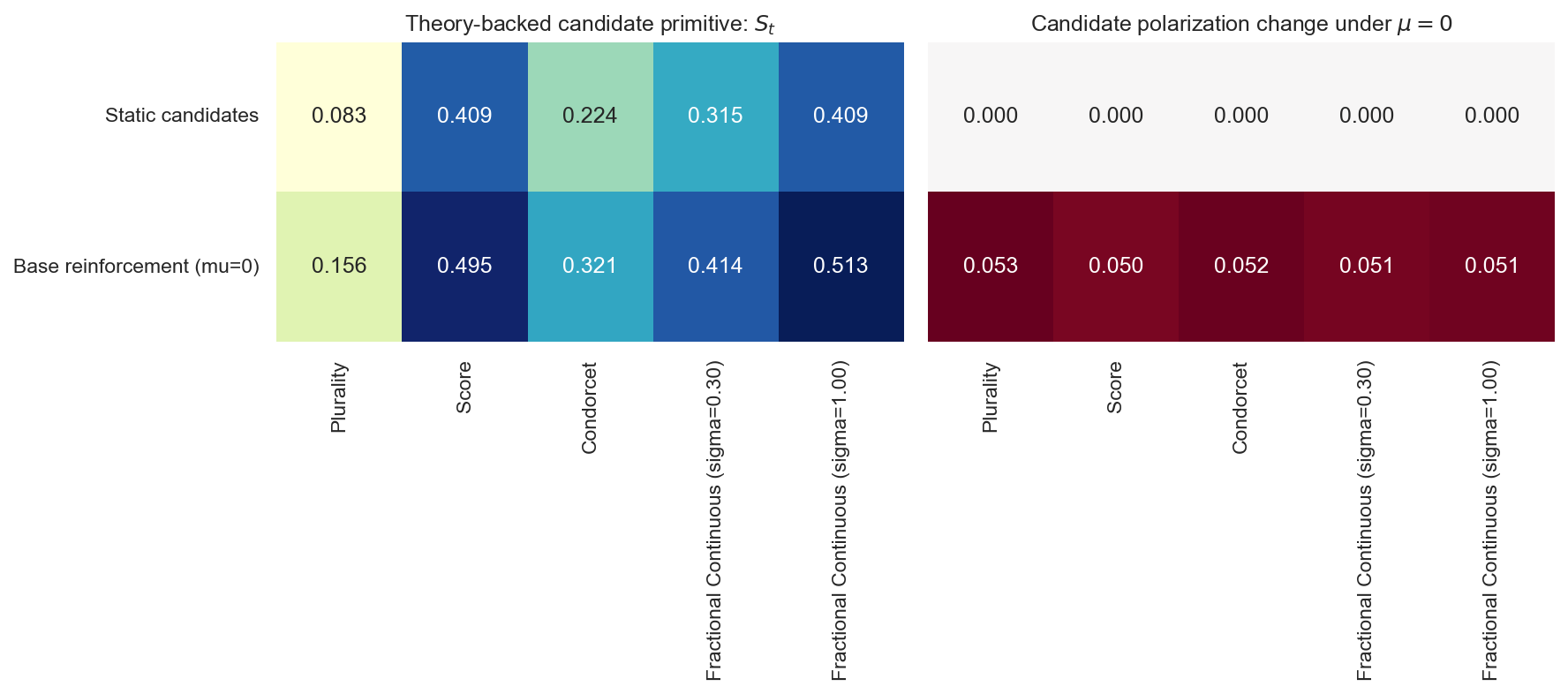}
  \caption{End-state supporter-centroid radius $S_t$ and candidate variance change $\Delta P^{(t)}$ under $\mu=0$. Hard-assignment rules achieve the smallest $S_t$; smoother rules achieve the smallest $R_t$ but the largest $S_t$. The reversal is the tradeoff from Section~\ref{subsec:tradeoff}.}
  \label{fig:mu0}
\end{figure}

Figure~\ref{fig:tradeoff} plots mean voter depolarization against mean winner-centering across all 1134 runs. No system dominates on both axes; the scatter traces the (empirical) boundary of the feasible region, and different weightings of the two objectives select different points on it.

\begin{figure}[htbp]
  \centering
  \includegraphics[width=0.72\textwidth]{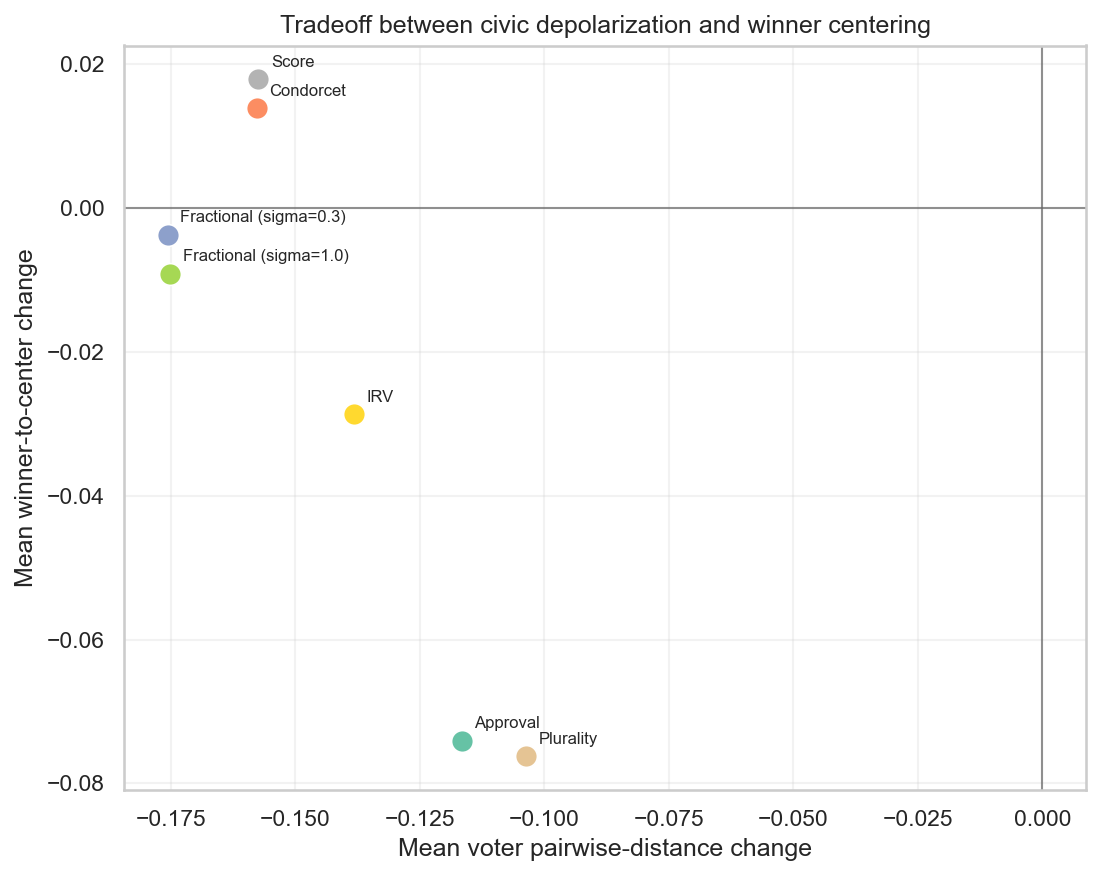}
  \caption{Mean start-to-end change in voter pairwise distance versus mean change in winner-to-center distance across all runs. No system dominates on both axes; different weightings of the two objectives select different points on the frontier.}
  \label{fig:tradeoff}
\end{figure}

\subsection{Oracle Comparison: Per-Step Centrality vs.\ Per-Step Depolarization}

To isolate the difference between the two primitives as design objectives, we compare 
two benchmark oracles on a fixed environment: Bridge conflict electorate, 70:30 camp 
balance, Polarized elites slate, Sorting pressure voter dynamics, and Base 
reinforcement with $\mu = 0$. Both oracles optimize over the full policy space 
$[0,1]^2$ and differ only in their per-step objective.

The \emph{centrality oracle} minimizes the winner radius each round:
\[
w_{\mathrm{cent}}^{(t)} \;\in\; \operatorname*{arg\,min}_{w\in[0,1]^2} R_t(w), 
\qquad R_t(w) := \max_i \|x_i^{(t)} - w\|_2.
\]
The \emph{depolarization oracle} minimizes next-round voter disagreement:
\[
w_{\mathrm{dep}}^{(t)} \;\in\; \operatorname*{arg\,min}_{w\in[0,1]^2} D^{(t+1)}(w),
\]
where $D^{(t+1)}(w)$ uses the deterministic part of the voter update — the leading 
term in the bound of Theorem~\ref{thm:main} — since noise is unobservable at decision 
time.

Figure~\ref{fig:oracle-trajectories} shows trajectories across 24 replicates 
($n = 1400$ voters, 16 rounds). Three findings stand out. First, the centrality oracle 
maintains smaller $R_t$ throughout while the depolarization oracle achieves lower voter 
variance by round 15, confirming that minimizing $R_t$ per step is not equivalent to 
minimizing $D^{(t)}$. Second, $R_t$ grows under both oracles: the Sorting pressure 
backlash term gradually restructures the electorate in a way that increases the minimax 
distance regardless of which objective is used. Third, and most strikingly, the two 
oracles induce opposite camp-displacement asymmetries. The depolarization oracle 
produces strong positive $A^t$ — the minority camp does most of the moving — because 
placing the winner to minimize aggregate disagreement in a 70:30 electorate naturally 
favors the larger camp. The centrality oracle produces strong negative $A^t$ — the 
majority camp does most of the moving — because protecting the farthest voter in an 
asymmetric electorate tends to protect the minority. Neither oracle optimizes $A^t$ 
directly; the asymmetry is a distributional consequence of the per-step objective 
choice.

\begin{figure}[t]
    \centering
    \includegraphics[width=\linewidth]{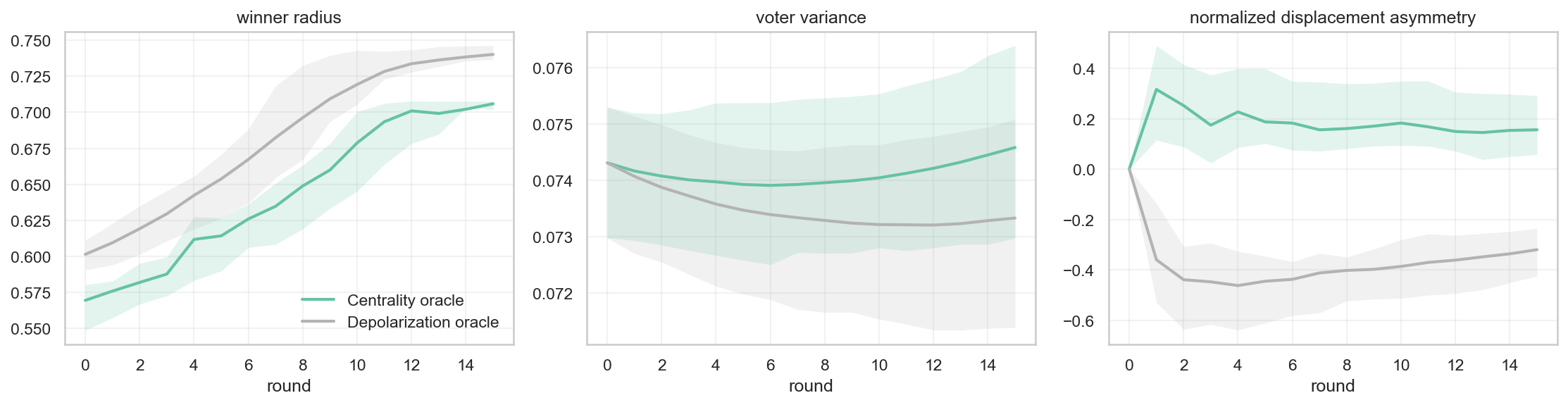}
    \caption{Oracle comparison across 24 replicates. Shaded bands are interquartile ranges. Left: centrality oracle maintains smaller $R_t$; both oracles show upward drift under Sorting pressure. Center: 
    depolarization oracle achieves lower voter variance by round 15. Right: the oracles induce opposite camp-displacement asymmetries --- a distributional consequence neither oracle optimizes directly.}
    \label{fig:oracle-trajectories}
\end{figure}

\section{Conclusion}

We introduced two geometric primitives for comparing electoral rules under repeated 
feedback: the winner radius $R_t$, which controls how fast voter disagreement contracts 
after each election, and the supporter centroid radius $S_t$, which plays the analogous 
role for candidate dispersion under smooth supporter assignment. The central finding is 
that these two objectives are in tension. Rules that achieve small $R_t$ tend to produce large $S_t$, 
pulling candidates away from their support bases. Rules that achieve small $S_t$ must constrain the winner to the candidate slate, inflating $R_t$. The simulations confirm this reversal across multiple runs and multiple electorate 
configurations, and the oracle comparison reveals an additional distributional 
consequence: in asymmetric electorates, optimizing per-step depolarization shifts the minority camp more, while optimizing per-step centrality shifts the majority camp more.

There are some limitations to these results. The theoretical bounds are upper bounds on expected 
disagreement, not equilibrium characterizations, and the comparison between rules is a 
comparison between bounds rather than between realized expectations. For future work, it would be useful to relate $R_t$ and $S_t$ to the metric distortion
literature~\citep{ABP2015,GHS2020}, which compares rules through worst-case geometric
loss in one-shot settings. Such a connection could clarify whether the dynamic
primitives introduced here admit analogous worst-case guarantees, and would link the
present framework to an existing theory of geometric rule comparison.

\bigskip
\noindent\textbf{Code and data.} All simulation code is available at \texttt{github.com/mukhes3/electoral\_sim}. The \texttt{electoral\_sim} Python package is on PyPI.

\bibliographystyle{plainnat}
\bibliography{electoral_polarization_refs_updated}

\appendix

\section{Background and definitions}
\label{app:background}

This appendix collects definitions and notation for readers coming from
adjacent fields. It is organized to be read independently of the main text:
start with the electoral-rule descriptions if you are new to voting theory,
start with Section~\ref{app:geometry} if you are new to the geometric
primitives, and consult Section~\ref{app:math} only if the proof steps in
Section~\ref{sec:theory} are unfamiliar.

\subsection{Electoral rules used in this paper}
\label{app:rules}

We study a finite electorate of $n$ voters and a slate of $K$ candidates.
Each voter $i$ has an ideal point $x_i \in \Omega \subset \mathbb{R}^d$ and
each candidate $j$ has a platform $c_j \in \Omega$. Smaller $\|x_i - c_j\|$
means voter $i$ prefers candidate $j$. All rules in this paper assume
\emph{sincere voting}: voters report preferences according to their true
positions rather than strategically misrepresenting them. This is standard
when the goal is to compare the geometric and dynamic effects of aggregation
rules rather than strategic equilibria.

A useful organizing distinction is between \emph{winner-take-all} rules,
which must select one of the existing candidate platforms as the winner, and
\emph{convex-combination} rules, which can implement a weighted average of
platforms. The former class cannot in general place the winner at an
arbitrary central point of the electorate; the latter can.

\paragraph{Plurality.}
Each voter casts one vote for their nearest candidate. The winner is the
candidate receiving the most votes. Voters' support regions form a hard
partition of the electorate --- the Voronoi cells of the candidate slate ---
making Plurality the canonical \emph{hard-assignment} rule. Each voter
contributes fully to one candidate and not at all to the others.

\paragraph{Instant-runoff voting (IRV).}
Voters submit a ranked ordering of candidates. If any candidate receives a
majority of first-place votes, that candidate wins. Otherwise, the candidate
with the fewest first-place votes is eliminated and those ballots are
transferred to the next remaining choice. Elimination and transfer continue
until one candidate holds a majority. IRV uses more preference information
than Plurality but still elects a candidate from the slate.

\paragraph{Approval voting.}
Each voter may approve any subset of candidates; the winner has the most
approvals. Approval requires an additional behavioral specification of which
candidates a voter approves. In the simulations, a voter approves all
candidates within a fixed distance threshold of their ideal point.

\paragraph{Score voting.}
Each voter assigns each candidate a numerical score from a fixed range (e.g.\
$0$ to $10$). The winner has the highest total score. Scores are derived from
distance by a monotone decreasing transformation: nearer candidates receive
higher scores. Score voting uses cardinal information and permits graded
support for multiple candidates, but still elects a single candidate from
the slate.

\paragraph{Condorcet (Schulze method).}
A candidate is a \emph{Condorcet winner} if they defeat every other
candidate in pairwise majority comparison: for every $\ell \ne j$, a
majority of voters rank $c_j$ above $c_\ell$. A Condorcet winner is the
natural benchmark for broad acceptability, but one need not exist (majority
cycles can occur). When a Condorcet winner exists, this paper's simulations
elect them directly. When no Condorcet winner exists, the simulations use the
\emph{Schulze method} as the cycle-resolution procedure. The Schulze method
selects a winner by finding the candidate with the strongest ``beatpath''
over all others: it computes the maximum-weight directed path in the pairwise
preference graph and elects the candidate who dominates all others along
these paths. 

\paragraph{Convex-combination rules and the Fractional benchmark.}
A rule is a \emph{convex-combination rule} if its output lies in the convex
hull of the candidate slate, $\conv(C)$, rather than at one of the candidate
positions. Such rules can place the implemented policy at any weighted average
of candidate platforms. The Fractional benchmark of
\citet{mukherjee2026electoral} is the specific convex-combination rule used
here: it assigns voter $i$ soft support weights
\[
  \alpha_{ij}^{\sigma}
  := \frac{\exp(-\|x_i - c_j\|^2/\sigma^2)}
          {\sum_{\ell=1}^K \exp(-\|x_i - c_\ell\|^2/\sigma^2)}
\]
and implements policy $w^\sigma = \sum_j \beta_j^\sigma c_j$ with
$\beta_j^\sigma = \frac{1}{n}\sum_i \alpha_{ij}^\sigma$. The bandwidth
$\sigma$ controls smoothness: small $\sigma$ concentrates support on the
nearest candidate, approximating hard assignment; large $\sigma$ spreads
support evenly. We use $\sigma \in \{0.3, 1.0\}$ to check robustness across
smoothness levels.

\subsection{Geometric notions}
\label{app:geometry}

\paragraph{1-center versus 1-median.}
Two natural notions of centrality appear in this paper and it is important to
keep them distinct.

The \emph{1-median} (or Fermat--Weber point) of voters
$x_1, \ldots, x_n$ minimizes the sum of distances:
\[
  w_{\text{med}} \in \arg\min_{w \in \Omega} \sum_{i=1}^n \|x_i - w\|.
\]
This is the standard social-cost objective in the metric distortion
literature \citep{ABP2015}.

The \emph{1-center} (or Chebyshev center) minimizes the \emph{worst-case}
distance:
\[
  w_{\text{ctr}} \in \arg\min_{w \in \Omega} \max_{i} \|x_i - w\|,
\]
and the associated optimal value is the \emph{Chebyshev radius}
$R^*(x) := \min_{w \in \Omega} \max_i \|x_i - w\|$.

The two objectives coincide only for symmetric, unimodal electorates. In
general they produce different winners and different comparison rankings of
electoral rules. The contraction bound in this paper (Theorem~\ref{thm:main})
is governed by the realized winner radius $R_t = \max_i \|x_i^{(t)} - w^{(t)}\|$,
a worst-case quantity. This is why a winner at the voter median can still have
a large $R_t$ and a weaker depolarization bound than a winner at the
1-center.

\paragraph{Convex hull.}
The convex hull of a candidate slate $C = (c_1, \ldots, c_K)$ is
\[
  \conv(C) := \Bigl\{ \textstyle\sum_{j=1}^K \beta_j c_j
              : \beta_j \ge 0 \text{ for all } j,\;
                \textstyle\sum_{j=1}^K \beta_j = 1 \Bigr\}.
\]
It is the smallest convex set containing all candidate positions. A
winner-take-all rule restricts the outcome to $C$ itself; a
convex-combination rule can reach any point in $\conv(C)$, strictly
expanding the set of achievable outcomes when $K \ge 2$.

\subsection{Mathematical background useful for some proofs}
\label{app:math}

\paragraph{Pairwise variance identity.}
The proofs work with pairwise distances rather than deviations from the mean,
because pairwise distances do not require tracking how the mean shifts after
each update. The following identity connects the two representations. For any
$x_1, \ldots, x_n \in \mathbb{R}^d$ with mean $\bar x$,
\[
  D = \frac{1}{n}\sum_{i=1}^n \|x_i - \bar x\|^2
    = \frac{1}{2n^2}\sum_{i=1}^n\sum_{j=1}^n \|x_i - x_j\|^2.
\]
To verify: expand $\|x_i - x_j\|^2 = \|x_i - \bar x\|^2 + \|x_j - \bar x\|^2 - 2\langle x_i - \bar x,\, x_j - \bar x\rangle$, sum over all pairs, and use $\sum_i (x_i - \bar x) = 0$ to eliminate the cross term. The same identity with $n$ replaced by $K$ holds for candidate variance $P^{(t)}$.

\paragraph{Drift inequalities and long-run bounds.}
Theorem~\ref{thm:long_run_bounds} converts a one-step conditional bound
into a statement about long-run behavior via a standard \emph{drift
inequality} (also called a Foster--Lyapunov condition in the Markov chain
literature). If $\mathbb{E}[V_{t+1} \mid \mathcal{F}_t] \le a V_t + b$ for
constants $a \in (0,1)$ and $b \ge 0$, then iterating gives
\[
  \mathbb{E}[V_t] \le a^t V_0 + \frac{b(1 - a^t)}{1 - a}
  \;\xrightarrow{t\to\infty}\; \frac{b}{1-a}.
\]
Here $V_t = D^{(t)}$, $a = q_\ast^2 < 1$, and $b = \sigma_\varepsilon^2$.
The long-run limit $b/(1-a)$ is the noise floor the system cannot go below
even if $R_t$ is kept small. If the chain also admits an invariant
distribution $\pi$ --- a probability distribution that is unchanged by
one-step evolution, formally $\pi = \pi P$ where $P$ is the transition
kernel --- then applying the drift inequality under stationarity gives
$\mathbb{E}_\pi[D] \le b/(1-a)$ as well. Existence of $\pi$ follows from
compactness of $\Omega$ and Feller continuity of the transition kernel; see
\citet{MeynTweedie}.

\end{document}